\newcommand{\upse}{UPSE\xspace}
\newcommand{\cons}{consistent\xspace}
\newcommand{\centralPath}{backbone\xspace}
\newtheorem{theorem}{Theorem}
\newtheorem{lemma}{Lemma}
\newtheorem{observation}{Observation}
\crefname{figure}{Figure}{Figures}
\crefname{lemma}{Lemma}{Lemmas}
\crefname{theorem}{Theorem}{Theorems}
\crefname{observation}{Observation}{Observations}
\begin{document}
\title{Upward Point Set Embeddings of Paths and Trees\thanks{This work was initiated at the 7th Annual Workshop on Geometry and Graphs, March 10--15, 2019, at the Bellairs Research Institute of McGill University, Barbados. We are grateful to the organizers and to the participants for a wonderful workshop.}}

\author[1]{Elena Arseneva\thanks{Email: \texttt{e.arseneva@spbu.ru}. Partially supported by by RFBR, project \mbox{20-01-00488}.}}
\author[2]{Pilar Cano\thanks{Email: \texttt{pilar.cano@ulb.ac.be}. Supported by the F.R.S.-FNRS under Grant no.~MISU F 6001 1.}}
\author[3]{Linda Kleist\thanks{Email: \texttt{kleist@ibr.cs.tu-bs.de}.}}
\author[4]{Tamara Mchedlidze\thanks{Email: \texttt{mched@iti.uka.de}.}}
\author[5]{Saeed Mehrabi\thanks{Email: \texttt{smehrabi@mun.ca}.}}
\author[6]{Irene Parada\thanks{Email: \texttt{i.m.de.parada.munoz@tue.nl}.}}
\author[7]{Pavel Valtr\thanks{Email:\texttt{valtr@kam.mff.cuni.cz}. Work supported by grant no.~19-17314J of the Czech Science Foundation (GA\v CR).}}

\affil[1]{Saint Petersburg State University, Russia}
\affil[2]{Universit\'e Libre de Bruxelles, Belgium}
\affil[3]{Technische Universit\"at Braunschweig, Germany}
\affil[4]{Karlsruhe Institute of Technology, Germany}
\affil[5]{Memorial University, St. John's, Canada}
\affil[6]{TU Eindhoven, The Netherlands}
\affil[7]{Department of Applied Mathematics, Charles University, Czech Republic}

\maketitle              
\begin{abstract}
We study upward planar straight-line embeddings (\upse) of directed trees on given point sets. The given point set $S$ has size at least the number of vertices in the tree. 
For the special case where the tree is a path $P$ we show that: (a) If $S$ is one-sided convex, the number of \upse{s} equals the number of maximal monotone paths in  $P$. (b) If $S$ is in general position and $P$ is composed by three maximal monotone paths, where the middle path is longer than the other two, then it always admits an \upse on $S$. We show that the decision problem of whether there exists an \upse of a directed tree with~$n$ vertices on a fixed point set $S$ of $n$ points is NP-complete, by relaxing the requirements of the previously known result which relied on the presence of cycles in the graph, but instead fixing position of a single vertex.
Finally, by allowing extra points, we guarantee that each directed caterpillar on $n$ vertices and with $k$ switches in its \centralPath admits an \upse on every set of $n 2^{k-2}$ points.

\end{abstract}

\section{Introduction}
A classic result by Fary, Stein, Wagner \cite{fary1948,stein1951,wagner1936}, known as Fary's theorem, states that every planar graph has a crossing-free straight-line drawing. Given a directed graph (called \emph{digraph} for short), it is natural to represent the direction of the edges by an \emph{upward} drawing, i.e., every directed edge is represented by a monotonically increasing curve. Clearly, it is necessary for the digraph to be acyclic in order to allow for an upward drawing. 

In nice analogy to Fary’s theorem,  if a planar digraph has an upward planar drawing, then it also allows for an upward planar straight-line drawing~\cite{di1988}.
In contrast, not every planar acyclic digraph has an upward planar drawing~\cite{di1988}. Nevertheless, some classes of digraphs always allow for such drawings. For instance, every directed tree has an upward planar straight-line drawing~\cite[p.~212]{tollis1999}.

In this work, we study upward planar straight-line drawings of directed trees on given point sets. 
An \emph{upward planar straight-line embedding} (\upse, for short) of a digraph $G=(V,E)$ on a point set~$S$, where $|V|\leq |S|$, is an injection from $V$ to $S$
such that the induced straight-line drawing is planar (crossing-free) and upward, i.e., for the directed edge $uv \in E(G)$, the point representing $u$ lies below the point representing $v$.
As point-set embeddings of planar undirected  graphs~\cite{bose2002PSembedding,JGAA-2,JGAA-132,EGC17_UPSE,pach1991PSembedding}, \upse{s} have been an active subject of research~\cite{AngeliniFGKMS10,BinucciGDEFKL10,KaufmannMS13,JupwardPath14}. In the following we review the state of the art in relation to our problems.

Kaufmann et al.~\cite{KaufmannMS13} showed that in case $|V|=|S|$ it is NP-complete to decide whether an upward planar digraph admits an \upse on a given point set. 
We note that the upward planar digraph obtained in their reduction contains cycles in its underlying undirected graph.  
The same authors 
gave polynomial-time algorithm 
to decide 
if an upward planar digraph  admits an \upse on a convex point set.

A digraph whose underlying undirected structure is a simple path is called \emph{oriented path}.
For the class of oriented paths multiple partial results have been provided, by either limiting the class of the point set, the class of oriented paths, or by considering the case where $|S|>|V|$. In particular, by limiting the class of point sets, Binucci et al. showed that every oriented path admits an \upse on every convex point set of the same size~\cite{BinucciGDEFKL10}. 
By limiting the class of oriented paths, it has been shown that the following subclasses of oriented paths always admit an \upse on any general point set of the same size: 
\begin{compactenum}
\item An oriented path with at most five switches\footnote{A vertex of a digraph which is either a source or a sink is called \emph{switch}.} and at least two of its sections\footnote{A section of an oriented path is a subpath that connects two consecutive switches.} having length\footnote{The \emph{length} of a path is the
number of vertices in it.} two~\cite{AngeliniFGKMS10}.
\item An oriented path $P$ with three switches~\cite{BinucciGDEFKL10}.
\item An oriented path $P = (v_1,\dots, v_n)$, so that if its vertex $v_i$ is a sink,
then its vertex $v_{i+1}$ is a source~\cite{BinucciGDEFKL10}.
\item An oriented path $P$ such that its decomposition into maximal monotone paths $P_1,P_2 \ldots , P_r$ satisfies that
$|P_i| \ge \sum_{j>i} |P_j| $ for every $i= 1,2,\ldots, r-1$~\cite{EGC17_UPSE}. 
\end{compactenum}

Given these partial results, it is an intriguing open problem whether every oriented path $P$ 
admits an \upse on every general point set $S$ with $|P|=|S|$. 
In contrast to this question, there exists a directed tree $T$ with $n$ vertices and a set $S$ with $n$ points in convex position such that $T$ does not admit an \upse on $S$~\cite{BinucciGDEFKL10}. While restricting the class of trees to directed caterpillars, Angelini et al.~\cite{AngeliniFGKMS10} have shown that an \upse  exists on any convex point set.

The variant of the problem where the point set is larger than the oriented path, was considered by Angelini et al.~\cite{AngeliniFGKMS10} and Mchedlidze~\cite{JupwardPath14}. They proved that every oriented path $P$ with $n$ vertices  and $k$
switches admits an \upse on every general point set with $n 2^{k-2}$
points~\cite{AngeliniFGKMS10} or on  $(n-1)^2+1$ points Mchedlidze~\cite{JupwardPath14}, respectively. 

\paragraph{Our Contribution.} 
In this paper we continue the study of \upse of digraphs. We tackle the aforementioned open problem from multiple sides. Firstly, we show that the problem of deciding whether a digraph on $n$ vertices admits an \upse on a given set of $n$ points remains NP-complete even for trees when one vertex lies on a predefined point. (Section~\ref{sec:hardness}). 
This strengthens the previously known NP-completeness result, where the underlying undirected structure contained cycles~\cite{KaufmannMS13}.  Thus, even if it is still possible that every oriented path admits an \upse on every general point set, this new NP-completeness might foreshadow that a proof for this fact will not lead to a polynomial time construction algorithm.
Secondly, we provide a new family of $n$-vertex oriented paths that admit an \upse on any general set of $n$ points  (Section~\ref{sec:specialPaths}), extending the previous partial results~\cite{AngeliniFGKMS10,BinucciGDEFKL10}. 
Thirdly, by aiming to understand the degrees of freedom that one has while embedding an oriented path on a point set, we show that the number of different \upse{s} of an $n$-vertex oriented path on a one-sided convex  set of $n$ points is equal to the number of sections the path contains (Section~\ref{sec:realizations}).
Finally, as a side result, we study the case where the point set is larger than the graph and show that the upper bound $n 2^{k-2}$ on the size of a general point set that hosts every oriented path~\cite{AngeliniFGKMS10} can be extended to caterpillars (Section~\ref{sec:caterpillars}), where $k$ is the number of switches in the caterpillar. The proof is largely inspired by the corresponding proof for oriented paths. However, the result itself opens a new line of investigations of providing upper bounds on the size of general point sets that are sufficient for \upse of families of directed trees.

\paragraph{Definitions.} 
A set of points is called
\emph{general}, if no three of its points lie on the same
line and no two have the same $y$-coordinate. 
The \emph{convex hull} $H(S)$ of a point set $S$ is the point set that can be obtained as a convex combination of the points of $S$. We say that a point set is in \emph{convex position}, or is a \emph{convex point set}, if none of its points lie in the convex hull of the others. 
Given a general point set $S$, we denote the lowest and the highest point of $S$ by $b(S)$ and
$t(S)$, respectively.  
A subset of points of a convex point set $S$ is called \emph{consecutive} if its points appear consecutively as we traverse the convex hull of $S$.
A convex point set $S$ is called \emph{one-sided} if the points $b(S)$ and $t(S)$ are consecutive in it; refer to~\cref{fig:convChain}.

Let   $\Gamma$ be an \upse of a digraph $G=(V,E)$ on a point set $S$.  For every $v\in V$, $\Gamma(v)$ denotes the point of $S$ where vertex $v$ has been mapped to by $\Gamma$. 
A \emph{directed tree} is a digraph, whose underlying graph is a tree. 
A digraph, whose underlying graph is a simple path is called \emph{oriented path}.
A \emph{directed caterpillar} is a directed tree in which the removal of the vertices of degree 1  results in an oriented path: the \emph{\centralPath}. 
For an oriented path $(v_1,v_2,\dots,v_n)$,  we call $v_i v_{i+1}$ a \emph{forward} (resp., \emph{backward}) edge if it 
is oriented from $v_i$ to $v_{i+1}$ (resp., from $v_{i+1}$ to $v_i$).
A vertex of a digraph with
in-degree (resp., out-degree) equal to zero is called a \emph{source} (resp., \emph{sink}). A vertex of a digraph which is either a source or a sink is called \emph{switch}. A subpath of an oriented path $P$ connecting two of its consecutive switches is said to be \emph{monotone} and called a \emph{section} of $P$. A section is \emph{forward} (resp., backward) if it consists of forward (resp., backward) edges.

\section{Counting Embeddings of Paths on Convex Sets}
\label{sec:realizations}
In this section, we study the number of \upse{s} that an $n$-vertex oriented path has on a one-sided convex set of $n$ points. We show that this number is equal to the number of sections in the oriented path. We start with the following

\begin{lemma}\label{obs:one-vertex-per-realization}
Let $P$ be an $n$-vertex oriented path with $v_1$ being one of its end-vertices and let $S$ be a one-sided convex point set with $|S|=n$. For any two different \upse{s} $\Gamma_1$ and $\Gamma_2$ of $P$ on $S$, it holds that $\Gamma_1(v_1) \neq \Gamma_2(v_1)$.
\end{lemma}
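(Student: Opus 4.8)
The plan is to prove the contrapositive: if two \upse{s} $\Gamma_1,\Gamma_2$ satisfy $\Gamma_1(v_1)=\Gamma_2(v_1)$, then $\Gamma_1=\Gamma_2$. The key simplification comes from the geometry of one-sided convex sets. Since $b(S)$ and $t(S)$ are consecutive on the hull, sorting $S$ by $y$-coordinate as $p_1,\dots,p_n$ yields exactly the cyclic order of the convex hull. Hence two straight edges cross if and only if their endpoints interleave in this order; writing each edge as the interval of the $y$-ranks of its two endpoints, the induced drawing is planar if and only if the edge intervals form a laminar family (pairwise nested or disjoint), and the upward condition becomes purely combinatorial: traversing the path, forward edges increase the rank and backward edges decrease it. I would state this reformulation first, turning the whole statement into a question about rank assignments.

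Next I would set up induction on $n$, peeling off the endpoint $v_1$. Removing $\Gamma(v_1)$ from $S$ leaves a one-sided convex set $S'$ (every subset is again one-sided, since $y$-order and hull order coincide), and $v_2\dots v_n$ is an oriented path with endpoint $v_2$; by the inductive hypothesis its embedding on $S'$ is determined by the rank of $v_2$. So it suffices to show that the rank of $v_2$ is forced by the rank $r$ of $v_1$. Assume without loss of generality that $v_1v_2$ is a forward edge, so $v_2$ lies above $v_1$; let $\rho$ be its rank and let $A$ be the set of vertices whose rank lies strictly between $r$ and $\rho$. Laminarity applied to the edge $\{v_1,v_2\}$, whose interval is $[r,\rho]$, shows that no edge may join a vertex of $A$ to a vertex outside $[r,\rho]$; since $v_1$ is a leaf, the vertices of $A$ can attach to the rest of the path only through $v_2$. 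Tracing the path from $v_2$ then yields the dichotomy: either $A=\emptyset$, forcing $\rho=r+1$, or $A=\{v_3,\dots,v_n\}$, forcing $r=1$ and $\rho=n$.

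The main obstacle is the tie-break when $r=1$, where both $\rho=2$ and $\rho=n$ survive the dichotomy a priori, so the rank of $v_2$ is not yet pinned down. I would resolve this using the fixed orientation of the \emph{second} edge $v_2v_3$: in the case $\rho=2$ the only admissible rank for $v_3$ forces $v_2\to v_3$ to be forward, whereas in the case $\rho=n$ the vertex $v_2$ occupies the topmost point and is therefore a sink, which forces $v_2v_3$ to be backward. Since the orientation of $v_2v_3$ is dictated by $P$, exactly one case is consistent, so $\rho$ is determined. (For $r>1$ only $\rho=r+1$ is possible, and the cases $n\le 2$ are immediate.) With the rank of $v_2$ fixed, the inductive hypothesis applied to $v_2\dots v_n$ on $S'$ completes the step, and the symmetric situation of a backward first edge follows by a top--bottom reflection of $S$, which exchanges forward and backward edges while preserving one-sidedness.
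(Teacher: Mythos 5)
Your proof is correct and follows essentially the same route as the paper: peel off the endpoint $v_1$ (your induction is the paper's minimal-counterexample argument), establish that $v_1$ and $v_2$ must occupy hull-consecutive points so that the only ambiguity is between ranks $r+1$ and the wrap-around pair $(p_1,p_n)$, and resolve that ambiguity using the fixed orientation of $v_2v_3$. The only difference is that the paper imports the consecutiveness fact from Lemma~3 of Binucci et al., whereas you derive it directly via the interval-overlap argument, making the proof self-contained.
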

\begin{proof}
Let $\{p_1,\dots,p_n\}$ be the points of $S$ sorted by $y$-coordinate. For the sake of contradiction, assume that there exist two different \upse{s} $\Gamma_1$ and $\Gamma_2$ of $P$ on $S$ with $\Gamma_1(v_1) = \Gamma_2(v_1)$. Additionally, assume that the considered counterexample is minimal, in the sense that for the vertex $v_2$ of $P$, adjacent to $v_1$, $\Gamma_1(v_2) \neq \Gamma_2(v_2)$. 
By~\cite[Lemma~3]{BinucciGDEFKL10}, vertices $v_1$ and $v_2$ lie on consecutive points of $S$. 
We assume that the edge $v_1v_2$ is a forward edge; 
the case when $v_1v_2$ is a backward edge 
is symmetric.
Conditions $\Gamma_1(v_1) = \Gamma_2(v_1)$,  $\Gamma_1(v_2) \neq \Gamma_2(v_2)$, and the fact that $v_1$, $v_2$ lie on consecutive points of $S$, imply that $\Gamma_1(v_1)=\Gamma_2(v_1)=p_1$, and $\Gamma_i(v_2)=p_n$, $\Gamma_j(v_2)=p_2$, with $i,j\in \{1,2\}$. Embedding $\Gamma_i$ is an \upse of $P$ when the edge $v_2v_3$ is backward, while $\Gamma_j$ is an \upse of $P$ when the edge $v_2v_3$ is forward. We arrive to a contradiction. 
\end{proof}

We are now ready to prove the result of this section.

\begin{theorem}\label{lem:convChain}
An $n$-vertex oriented path $P$ with $k$ sections has exactly  $k$ \upse{s} on a one-sided convex set of $n$ points.
\end{theorem}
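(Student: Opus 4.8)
The plan is to use \cref{obs:one-vertex-per-realization} to turn the statement into a counting problem for the image of the fixed end\-vertex $v_1$: since distinct \upse{s} place $v_1$ on distinct points, the number of \upse{s} equals the number of points of $S$ that are feasible for $v_1$, so it suffices to show that exactly $k$ of them are. First I would set up the combinatorial model that one\-sidedness provides. Writing $p_1,\dots,p_n$ for the points of $S$ in increasing $y$-order, the convex\-hull order coincides with this $y$-order and $b(S)=p_1$, $t(S)=p_n$ are consecutive. An assignment is then an \upse exactly when (i) it is upward, i.e.\ the $y$-ranks increase along forward edges and decrease along backward edges, and (ii) it is planar; and for a Hamiltonian path on points in convex position, planarity is equivalent to the statement that every prefix $\{\Gamma(v_1),\dots,\Gamma(v_i)\}$ is a set of hull\-consecutive points. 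I would prove (ii) by peeling $v_1$ and iterating \cite[Lemma~3]{BinucciGDEFKL10}, exactly as it is invoked in the proof of \cref{obs:one-vertex-per-realization}: the endpoint edge maps to a hull edge, so $\Gamma(v_2)$ is consecutive to $\Gamma(v_1)$, and repeating on $v_2,\dots,v_n$ in $S\setminus\{\Gamma(v_1)\}$ shows each prefix is an arc.

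With this model, the key structural observation is that building the path in the order $v_1,v_2,\dots,v_n$ grows the occupied point set one point at a time at the two ends of a hull\-interval, and that \emph{almost every step is forced}. As long as the occupied interval is a block of consecutive $y$-ranks not yet touching both extremes, a forward edge must attach the lowest still\-unused point \emph{above} the block and a backward edge the highest still\-unused point \emph{below} it, independently of where the last\-placed vertex sits; there is no choice. The only freedom is a single ``wrap'' step: once the block reaches one extreme ($p_1$ or $p_n$), the current orientation may instead permit the path to jump to the opposite extreme, using that $p_1$ and $p_n$ are consecutive; afterwards the occupied set is a hull\-interval through the seam and no second wrap is possible. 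Thus every \upse is determined by the starting rank of $v_1$ together with whether and where this unique wrap occurs.

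Counting then proceeds by classifying \upse{s} according to this wrap. There is exactly one \emph{wrap\-free} \upse: starting $v_1$ at rank $1+\beta$, where $\beta$ is the number of backward edges, the forced process fills the block monotonically and never overruns an extreme, and $1+\beta$ is the only starting rank for which no wrap is needed (the final block must be all of $\{1,\dots,n\}$, forcing this value). For the remaining \upse{s} I would show that a wrap can occur only at the last edge of a section, i.e.\ the edge entering a switch, and that for each of the first $k-1$ sections there is exactly one starting rank whose forced process wraps there and completes; wrapping at any other edge, or starting anywhere else, drives the block into an extreme and gets stuck. This yields $1+(k-1)=k$ feasible positions for $v_1$, hence exactly $k$ \upse{s}; the case where the first edge is backward is symmetric by reflecting in a horizontal line. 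The main obstacle is precisely this last bookkeeping—proving that each section carries one and only one valid wrap, so that sending an \upse to the section containing its wrap (with the wrap\-free embedding assigned to the last section) is a bijection onto the $k$ sections.
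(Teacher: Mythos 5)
Your route is genuinely different from the paper's, and the difference matters. The paper only proves the easy direction constructively: by induction on $k$ it exhibits $k$ distinct \upse{s} (distinct by \cref{obs:one-vertex-per-realization}), and then obtains the matching upper bound by a global double count over \emph{all} oriented paths at once: every upward planar Hamiltonian configuration on a one-sided convex $S$ is encoded by a start point plus a clockwise/counterclockwise choice at each step, giving $n2^{n-2}$ configurations in total, and since $\rho_k=2\binom{n-2}{k-1}$ satisfies $\sum_k k\rho_k=\tfrac n2\sum_k\rho_k=n2^{n-2}$, the lower bounds $k\rho_k$ already exhaust the total and must all be tight. You instead try to characterize all \upse{s} of one fixed path and count feasible positions of $v_1$. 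That is a legitimate (and more informative) strategy, but it takes on exactly the case analysis that the paper's counting trick is designed to bypass.

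There are two concrete gaps. First, the claim that after the single wrap ``the occupied set is a hull-interval through the seam and no second wrap is possible'', hence every \upse is determined by the start and the wrap location, is not justified by your forcing argument: once the occupied arc crosses the seam between $p_1$ and $p_n$, the unused points form an interval flanked by the arc on both sides, and at the last edge of every remaining section \emph{both} hull-extensions lie on the correct side ($y$-wise) of the current vertex, so the process branches again at each remaining switch. For example, for $v_1\to v_2\to v_3\leftarrow v_4\to v_5$ on five points, starting at $p_1$ and wrapping $v_3$ to $p_5$, the edge into $v_4$ may continue to $p_4$ or jump to $p_3$; only the latter completes. Determinism can be rescued a posteriori from \cref{obs:one-vertex-per-realization} (at most one \upse per start), but then your classification must count completable starts, not wrap-branches. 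Second, the statement you yourself flag as the ``main obstacle'' --- that exactly one start completes wrap-free and exactly one start completes with a wrap at the end of each of the first $k-1$ sections, and no other start completes --- is the entire content of the upper bound and is not proved. The wrap-free part is fine (the start must be rank $1+\beta$ with $\beta$ the number of backward edges, and the partial-sum inequalities keep the block inside $[1,n]$), but the per-section uniqueness and existence requires analyzing the post-wrap branching, and nothing in the proposal does this. As written, the proposal establishes only the lower bound of $k$, which the paper already obtains more simply by induction.
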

\begin{proof}
We first show that $P$ has at least $k$ \upse{s}. To do so, let $P=(v_1,\dots,v_n)$ be an oriented path with $k$ sections and let $S$ be a one-sided convex point set with points $\{p_1,\dots,p_n\}$ ordered by the increasing $y$-coordinate.
Let $v_l$ be the switch of $P$ preceding~$v_n$. Thus, the subpath of $P$ between $v_l$ and $v_n$ is $P$'s last section. Let denote the subpath of $P$ between $v_1$ and $v_l$ by $P'$. We prove the statement of the theorem by the following stronger induction hypothesis: there are $k$ \upse{s} of $P$ on $S$, such that one of them maps $v_n$ to $p_n$ (resp., $p_1$) if the last section is forward (resp., backward). 
The base case of $k=1$ is trivial. Assume that the $k$-th section of $P$ is forward (resp., backward) and let $S'=\{p_1,\dots,p_l\}$ (resp., $S'=\{p_{n-l+1}, \ldots, p_n\}$). 
By induction hypothesis, the path $P'$ has $k-1$ \upse{s} on $S'$, with one of them mapping $v_l$ to $p_1$ (resp., $p_n$).
Let $\Gamma$ be one of them. Assume that $\Gamma(v_l)=p_i$, $p_i \in S'$. 
We shift every vertex that has been mapped to point $p_j$, $i<j\leq l$ (resp., $n-l+1 \leq j<i$) by $n-l$ points up (resp., down); refer to~\cref{fig:convChain}(a). We map the $k$-th section to points $p_{i+1},\dots,p_{i+n-l}$ (resp., $p_i-1,\dots,p_{i-n+l}$). This gives us $k-1$ \upse{s} of $P$ on~$S$.

\begin{figure}[tb]
	\centering
	\includegraphics[page=1]{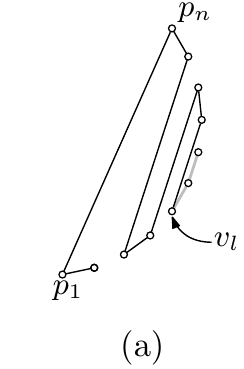}\hfil
	\includegraphics[page=2]{ConvexChain.pdf}\hfil
	\includegraphics[page=7]{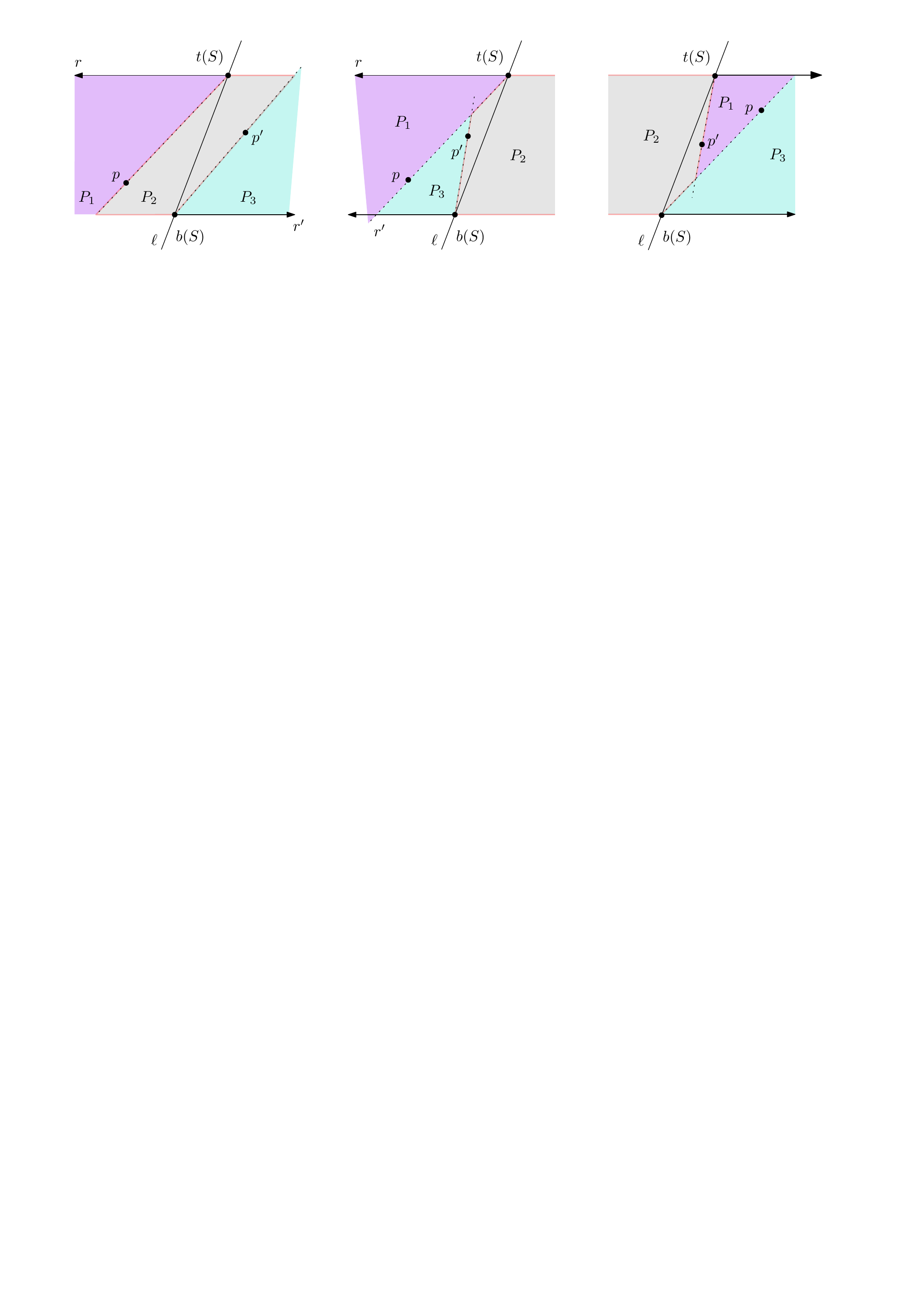}
	\caption{(a--b) An illustration for the proof of \cref{lem:convChain}, showing an extension of an \upse of $P'$ (black) to an \upse of $P$ (gray). In (a) $v_l$ lies on a non-extreme point of $S'$, while in (b) $v_l$ is an extreme point of~$S'$. (c) Description of path $P$ presented in \cref{lem:nShapeSameLength}.}
	\label{fig:convChain}
\end{figure}

Recall that, by induction hypothesis, $P'$ has an \upse $\Gamma'$ on $S'$ that maps $v_{l}$ to $p_1$ (resp., $p_n$) on $S'$, since its last section is backward (resp., forward). Thus, we can also extend $\Gamma'$ by mapping the $k$-th section to points $p_{l+1}, \ldots, p_n$ (resp., $p_1, \dots, p_{n-l}$); refer to~\cref{fig:convChain}(b). Hence, there exists a \upse of $P$ that maps $v_n$ to $p_n$(resp., $p_1$) if the last section is forward (resp., backward). 
By  Lemma~\ref{obs:one-vertex-per-realization}, no two of the constructed embedding of $P$ on $S$ are the same. Thus, $P$ has at least~$k$ \upse{s} on $S$. 

We now apply a counting argument to show that each oriented path with $k$ sections has exactly $k$ \upse{s} on $S$. Note that the total number of possible \upse{s} of different directed paths of size $n$ on an $n$-point one-sided convex point set is $n\cdot 2^{n-2}$. To see this, note that an \upse can be encoded by the start point and the position (clockwise or counterclockwise consecutive) of the next point. 
For the last choice the clockwise and the  counterclockwise choices coincide. Moreover, the number of oriented paths with $n-1$ edges and $k$ sections is $\rho_k:=2 \binom{n-2}{k-1}$ and the number of $n$-vertex oriented paths is $\sum_{k=1}^{n-1}\rho_k=2^{n-1}$. Let $\eta_k$ denote the number of \upse{s} of all oriented paths on $n$ vertices 
with $k$ sections.

As shown above, an oriented path with $k$ sections has at least $k$ \upse{s} on $S$. By the symmetry of the binomial coefficient, it holds that $\rho_{k}=\rho_{n-k}$. Therefore,  the number of \upse{s} of all oriented paths with $k$ and $n-k$ sections evaluates to at least $k\rho_k+(n-k)\rho_{n-k}=\frac{n}{2} (\rho_k+\rho_{n-k})$.
This implies
\begin{align*}
    n 2^{n-2}
    &=\sum_{k=1}^{n-1}\eta_k 
    \geq \sum_{k=1}^{n-1} k \rho_k
    = \frac{n}{2} \cdot \sum_{k=1}^{n-1}\rho_k 
    = \frac{n}{2} \cdot 2^{n-1}
    =n2^{n-2}.
\end{align*}
Consequently, each oriented path with $k$ sections has  exactly $k$ \upse{s} on $S$. 
\end{proof}


\section{Embedding of Special Directed Paths}
\label{sec:specialPaths}
In this section we present a family of oriented paths that always admit an \upse on every general point set of the same size. For an illustration of paths in this family, consider \cref{fig:convChain}(c).

\begin{theorem}
\label{lem:nShapeSameLength}
Let $P$ be an 
oriented path with three sections $P_1=(u_1,\dots,u_a)$, $P_2=(v_1,\dots,v_b)$ and $P_3=(w_1,\dots,w_c)$, where $u_a=v_1$, $v_b=w_1$ and $b\geq a,c$. Then path $P$ admits an \upse on any general set $S$ of $n=a+b+c-2$ points.
\end{theorem}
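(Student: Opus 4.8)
I would set up two reusable building blocks and then a two-level fan construction, using the hypothesis only through the inequality $b\ge c$ (the symmetric construction would instead use $b\ge a$).

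\medskip
\noindent\textbf{Setup and tools.} Up to reflecting $S$ across a horizontal line (which preserves general position and reverses all edge orientations), I may assume the \emph{N-shape} orientation: $P_1$ is forward, $P_2$ backward, $P_3$ forward. Then the shared vertex $\pi:=u_a=v_1$ is the (sink) \emph{peak}, the shared vertex $\nu:=v_b=w_1$ is the (source) \emph{valley}, and reading $P$ from $u_1$ to $w_c$ the $y$-coordinate must go up, then down, then up. I will use two elementary facts. First, if the points assigned to a single section are placed in increasing $y$-order, the resulting polyline is $y$-monotone, hence simple, and realizes that section upward. Second, if $p$ is the topmost (resp.\ bottommost) point of a set $S'$, then all other points of $S'$ lie in an open half-plane below (resp.\ above) $p$, so the rays from $p$ to them have angular span $<\pi$; ordering $S'\setminus\{p\}$ by this angle and cutting it into two contiguous blocks yields two \emph{convex} wedges with apex $p$ that meet only at $p$. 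General position guarantees distinct $y$-coordinates and that no two points are collinear with $p$, so this angular order is strict and the two wedges are genuinely disjoint apart from $p$.

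\medskip
\noindent\textbf{Outer step: peel off the arm $P_1$.} Place $\pi$ at the topmost point $\tau=t(S)$ (legal, as the global top must host a sink). Fan the remaining $n-1$ points by angle around $\tau$, take the leftmost $a-1$ of them for the ascending arm $P_1=(u_1,\dots,u_{a-1},\pi)$ placed in increasing $y$-order, and let $R$ be the remaining $b+c-2$ points. The arm lives in the left convex wedge $W_L$ while $R\cup\{\pi\}$ lies in the complementary convex wedge $W_R$, and $W_L\cap W_R=\{\pi\}$; hence whatever we draw inside $W_R$ can share with $P_1$ only the peak $\pi$. This reduces the problem to embedding the V-shaped path $P_2\cup P_3$ (the two-section path $\pi\leftarrow v_2\leftarrow\cdots\leftarrow\nu\rightarrow w_2\rightarrow\cdots\rightarrow w_c$) on the $b+c-1$ points of $R\cup\{\pi\}$, with the extra requirement that $\pi$ be mapped to $\tau$, which is the topmost point of this sub-instance.

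\medskip
\noindent\textbf{Inner step: the pinned V-path, where $b\ge c$ is used.} In this sub-instance the only source is $\nu$, so $\nu$ is forced onto the bottommost point, and I fan the other $b+c-2$ points upward around $\nu$. The valley $\nu$ has exactly two neighbours, $v_{b-1}\in P_2$ and $w_2\in P_3$, so I must split the fan into a block of size $b-1$ carrying the ascending arm $v_2,\dots,v_{b-1}$ together with $\pi$, and a block of size $c-1$ carrying $w_2,\dots,w_c$; each block placed in increasing $y$-order lives in its own convex wedge, the two wedges share only $\nu$, and since $\pi$ is the global top it automatically becomes the last (topmost) vertex of the $P_2$-block, i.e.\ is mapped to $\tau$ as required. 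The one nontrivial point is that the $(b-1)$-block must be chosen at an angular \emph{extreme} and must \emph{contain} $\pi$: if $\pi$ has angular rank $\rho\in\{1,\dots,b+c-2\}$, the leftmost block of size $b-1$ contains it iff $\rho\le b-1$ and the rightmost one iff $\rho\ge c$, and these two options cover every possible $\rho$ exactly when $c\le b$. This is precisely where the hypothesis enters, and I expect it to be the crux of the argument; the rest is the bookkeeping that the convex wedges at the two fan levels nest correctly ($W_R$ convex contains the whole inner drawing, whose sub-wedges meet only at $\nu$), so that combining $P_1$, $P_2$ and $P_3$ produces an upward planar straight-line embedding of $P$ on $S$.
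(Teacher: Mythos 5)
Your construction is correct, but it takes a genuinely different route from the paper's. The paper pins both extreme vertices of the middle section, mapping the peak $u_a=v_1$ to $t(S)$ and the valley $v_b=w_1$ to $b(S)$, and then distinguishes two cases according to how the points of $S$ distribute across the line through $t(S)$ and $b(S)$: the arms $P_1$ and $P_3$ are carved off into angular wedges apexed at $t(S)$ and $b(S)$ (on the same side of that line in one case, on opposite sides in the other), and $P_2$ is embedded by $y$-sorting the leftover points, which are shown to lie in a convex region whose topmost and bottommost vertices are $t(S)$ and $b(S)$. Both inequalities $b\ge a$ and $b\ge c$ are used there, since in the balanced case each side of the line is only guaranteed to contain at least $b$ points, which must exceed $a$ on one side and $c$ on the other. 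You instead anchor only the peak at $t(S)$, peel off $P_1$ unconditionally into an extreme block of the angular fan at $t(S)$, and then solve the residual pinned V-shaped subpath around its own local bottom point; the hypothesis enters exactly once, as the requirement that the intervals $[1,b-1]$ and $[c,b+c-2]$ of angular ranks cover $[1,b+c-2]$, which is precisely $b\ge c$. This buys two things: you avoid the case analysis on the point distribution entirely, and your argument (together with its mirror image, which peels $P_3$ first and uses $b\ge a$) establishes the formally stronger statement that $b\ge\min(a,c)$ already suffices, whereas the theorem assumes $b\ge\max(a,c)$. The bookkeeping you defer at the end --- that the two inner wedges at $\nu$ lie inside the convex wedge $W_R$ and meet only at $\nu$, while $W_L$ and $W_R$ meet only at $\tau$, so the three $y$-monotone arms pairwise share only their common path vertices --- does go through, since each arm's drawing is contained in the convex hull of its point block together with its apex, general position keeps all points off the separating rays, and edges meeting at $\tau$ or at $\nu$ are incident to a common vertex of $P$.
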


\begin{proof}
We assume that $P_1$ is forward, otherwise, we rename the vertices of $P$ by reading it off from vertex $w_c$.
Let $\ell$ denote the line through $t(S)$ and $b(S)$, and let $\ell^-$ (resp., $\ell^+$) be the halfplane on the left (resp., on the right) when walking from $b(S)$ to $t(S)$ along $\ell$. We assume that both $\ell^-$  and $\ell^+$ 
are closed, and hence $|S\cap \ell^+|+|S\cap \ell^-|=a+b+c$. We now consider two cases. 

\begin{figure*}[tb]
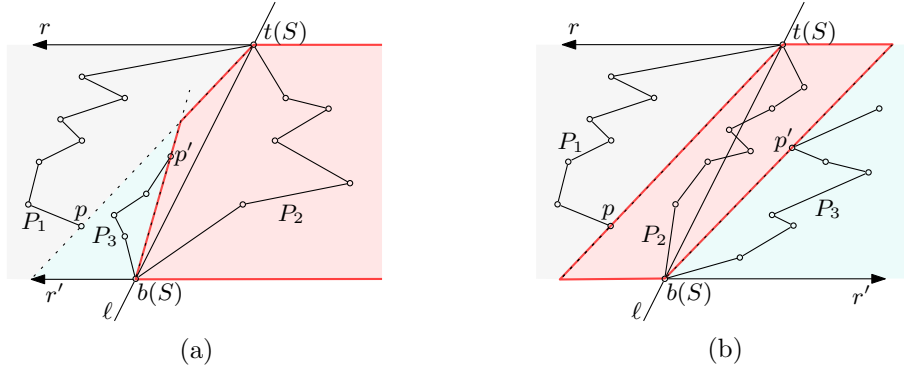

\centering
\includegraphics[page=4]{nShape-2}\hfil
\includegraphics[page=3]{nShape-2}
\caption{An illustration of the proof of Theorem~\ref{lem:nShapeSameLength}. (a) Embedding of $P$ in Case 1. (b) Embedding of $P$ in Case 2. Point set $S_2$ lies in the red convex polygon.  }%
\label{fig:nShapeNEW}
\end{figure*} 

\noindent{\bf Case 1:} $|S\cap \ell^+|\geq a+c$ or $|S\cap \ell^-|\geq a+c$. 
We consider the case $|S\cap \ell^-|\geq a+c$ (refer to~\cref{fig:nShapeNEW}(a), with $|S\cap \ell^+|\geq a+c$ being symmetric. We rotate a left heading horizontal ray~$r$ emanating from $t(S)$ in counter-clockwise direction until it hits the $a$-th point (including $t(S)$); we let $p$ denote this point and let $S_1$ denote the points swept by $r$.  Notice that $b(S)\not \in S_1$ because $|S\cap \ell^-|> a$. We embed $P_1$ on the points in~$S_1$, by sorting them by ascending $y$-coordinate. 

From $|S\cap \ell^-|\geq a+c$ it follows that $|(S\setminus S_1)\cap \ell^-|\geq c$. Hence, we can  embed~$P_3$ on $S\cap \ell^-$. To do so, we rotate around $b(S)$ a left-heading horizontal ray $r'$ in clockwise direction until it hits the $c$-th point of $(S\setminus S_1)\cap \ell^-$; denote these $c$ points by $S_3$ and let $p'$ be the last point hit by $r'$. 
We embed $P_3$ on the points of $S_3$ by sorting them by ascending $y$-coordinate. 

Let $S_2=(S\setminus (S_1\cup S_3))\cup\{b(S),t(S)\}$. Then, the polygonal region determined by the horizontal lines through $t(S)$ and $b(S)$, the line through $t(S)$ and $p$, the line through $b(S)$ and $p'$, and the vertical line through the rightmost point of $S_2$ is a convex region that contains all points in $S_2$ and 
and has $t(s)$ (resp., $b(s)$) as its topmost (resp., bottommost) vertex.
 Therefore, we can embed $P_2$ onto $S_2$ by sorting them by descending $y$-coordinate. We observe that $u_a=v_1$ and $v_b=w_1$ have been consistently embedded on $b(S)$ and $t(S)$, respectively.

\noindent{\bf Case 2:} $|S\cap \ell^+|\leq a+c$ and $|S\cap \ell^-|\leq a+c$.  It follows that  $|S\cap \ell^-|\geq b\geq a$ and $|S\cap \ell^+|\geq b\geq c$; refer to~\cref{fig:nShapeNEW}(b).
Since $|S\cap \ell^-|\geq a$, we can construct the set $S_1\subset S\cap \ell^-$ similarly to Case~1, and embed $P_1$ on its points. We then rotate a right-headed horizontal ray $r'$ in counter-clockwise direction around $b(S)$ until it hits the $c$-th point $p'$ of $\ell^+$ and denote by $S_3$ the points swapped by $r'$. Since $|S\cap \ell^+|> c$, $t(S) \not \in S_3$. 
We embed $P_3$ onto $S_3$ by sorting the points by ascending $y$-coordinate.

Finally, let $S_2=(S\setminus (S_1\cup S_3))\cup\{b(S),t(S)\}$. We note that the polygonal region determined by the horizontal lines through $t(S)$ and $b(S)$, the line through $t(S)$ and $p$, and the line through $b(S)$ and $p'$ is a convex polygon that contains all points of $S_2$. Also recall that  $b(S)$ and $t(S)$ are respectively the bottommost and the topmost  points of $S_2$. Thus, we can embed $P_2$ onto the points in $S_2$ by  sorting them by descending $y$-coordinate. Note that $u_a=v_1$ and  $v_b=w_1$ have been consistently mapped to $t(S)$ and $b(S)$, respectively.
\end{proof}

\section{Embedding Caterpillars on Larger Point Sets}
\label{sec:caterpillars}
In this section, we provide an upper bound on the number of points that suffice to construct an \upse of an $n$-vertex caterpillar. We first introduce some necessary notation. Let $C$ be a directed caterpillar with $n$ vertices, $r$ of which, $v_1,v_2,\dots,v_r$, form its \centralPath. For each vertex $v_i$ ($i=1,2,\dots,r$), we denote by $A(v_i)$ (resp., $B(v_i)$) the set of degree-one vertices of $C$ adjacent to $v_i$ by outgoing (resp., incoming) edges. Moreover, we let $a_i=|A(v_i)|$ and $b_i=|B(v_i)|$.
\begin{theorem}
\label{thm:caterpillars}
Let $C$ be a directed caterpillar with $n$ vertices and $k$ switches in its \centralPath. Then $C$ admits an \upse on any general point set $S$ with $|S|\geq n2^{k-2}$.
\end{theorem}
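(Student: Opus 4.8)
The plan is to adapt the known result for oriented paths (Angelini et al.~\cite{AngeliniFGKMS10}) to caterpillars by first embedding the \centralPath and then inserting the degree-one vertices into the ``slack'' provided by the extra points. Recall that the central path is an oriented path on $r$ vertices with $k$ switches, and by the path result it admits an \upse on any general point set of size $r2^{k-2}$. The idea is to reserve, for each central-path vertex $v_i$, a cluster of points large enough to accommodate $v_i$ together with its $a_i$ out-neighbors and $b_i$ in-neighbors, and then to treat each such cluster as a single ``super-point'' when invoking the path embedding.

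First I would set up the clustering. Since $\sum_{i=1}^r(1+a_i+b_i)=n$, I want to partition (a subset of) $S$ into $r$ groups $S_1,\dots,S_r$ of consecutive points in the $y$-order, where $S_i$ has size $1+a_i+b_i$ and each group is \emph{well-separated} in the $y$-direction from the others, i.e.\ every point of $S_i$ lies below every point of $S_{i+1}$ in the induced ordering that the path embedding will use. The point is that if the clusters are thin enough horizontally and sufficiently separated vertically, then an upward planar embedding of the central path treating each $S_i$ as a representative point extends to an \upse of the whole caterpillar: within cluster $S_i$ we place $v_i$, put its in-neighbors in $B(v_i)$ on the lowest points of $S_i$ and its out-neighbors in $A(v_i)$ on the highest points, so that all incident leaf-edges are short, upward, and confined to the cluster's vicinity, and hence cross neither each other nor the central-path edges.

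Second, I would account for the point budget. The central path has $k$ switches and $r$ vertices, so it needs a general point set of size $r2^{k-2}$; but here the ``points'' are the $r$ clusters, and the clusters have different sizes. To make the cluster sizes uniform I would pad: let $m=\max_i(1+a_i+b_i)$ and give every cluster $m$ points, so the total is at most $r\cdot m\cdot 2^{k-2}$ in the crudest estimate. Getting this down to the claimed $n2^{k-2}$ requires a more careful, size-aware version of the recursive path-embedding argument, where the recursion splits $S$ at the switches of the central path and the number of points devoted to a monotone section is proportional to the total \emph{weight} (vertices plus leaves) of that section rather than merely its vertex count. Concretely, I would prove by induction on $k$ that a weighted path — a caterpillar whose central-path vertex $v_i$ carries weight $w_i=1+a_i+b_i$ — admits an \upse on any general set of $\bigl(\sum_i w_i\bigr)2^{k-2}=n2^{k-2}$ points, by choosing a splitting line at the topmost (or bottommost) switch that divides $S$ into two halves each large enough, by a pigeonhole/averaging argument, to host one of the two resulting weighted subpaths with its halved switch count.

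\textbf{Main obstacle.} The delicate step is the recursive splitting with weights: in the unweighted path proof one splits the point set so that each half contains at least half the points, and each subpath has fewer switches; here I must simultaneously guarantee that each half contains enough points relative to the total weight of the subcaterpillar assigned to it \emph{and} that the local geometry lets every bundle of leaves at a switch vertex be drawn upward without crossings. I expect the crossing-avoidance within and between clusters to follow from convexity/separation arguments analogous to those in \cref{lem:nShapeSameLength}, so the real work is making the weighted counting tight enough to land exactly at $n2^{k-2}$ rather than a looser bound; I would mirror the structure of the Angelini et al.\ induction, replacing vertex counts by weights throughout and verifying that the base case ($k\le 2$, a monotone or single-switch caterpillar, embeddable on any general set of $n$ points by $y$-sorting with local leaf placement) still goes through.
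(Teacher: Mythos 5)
Your final plan---induction on the number of switches, a base case where a one-section (monotone) caterpillar is embedded by $y$-sorting the points and interleaving each $v_i$'s leaves around it, and a weighted recursion that charges each section its full weight $w_i=1+a_i+b_i$ rather than its vertex count---is exactly the strategy of the paper's proof, and your instinct to discard the ``super-point'' clustering (which only yields $r\cdot m\cdot 2^{k-2}$) is right. Two points in your sketch, however, do not match how the recursion actually has to work. First, the step is not a balanced split into ``two halves each hosting one of the two resulting weighted subpaths with halved switch count.'' Instead one peels off only the \emph{first} section $C_{1,l-1}$ and embeds it greedily on the lowest $c_1=\sum_{i=1}^{l-1}(1+a_i+b_i)$ points; then a single line through the image $p$ of $v_{l-1}$ and a suitable point $p'$ splits the remaining $|S|-c_1> n_i2^{i-1}$ points into two parts, and one recurses on the \emph{larger} part (of size at least $n_i2^{i-2}$) with the subcaterpillar $C_{l,r}$, which has exactly one fewer switch. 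The other half is simply discarded---that waste is where the factor $2^{k-2}$ comes from. A balanced split at a middle switch neither halves the switch count in general nor gives you control over which half of the point set is geometrically usable for which subpath.

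Second, and more substantively, your proposal is missing the strengthened induction hypothesis that makes the recursion glue together: one must require that the first backbone vertex $v$ of the (sub)caterpillar is mapped to the $(|B(v)|+1)$-th bottommost (resp.\ $(|A(v)|+1)$-th topmost) point of its point set, depending on whether $v$ is a source or sink. This is what guarantees that the leaves of the shared switch $v_l$ fit above/below it, that $p'$ can be chosen as the $(a_l+1)$-th topmost point of the remaining set, and that the only edge crossing the splitting line is $(v_{l-1},v_l)$, with the entire recursive drawing confined to one side of it. Without pinning down where the switch vertex lands inside its ``cluster,'' the crossing-avoidance you defer to ``convexity/separation arguments'' does not go through, because the induction gives you no handle on the geometry at the interface between consecutive sections. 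With that hypothesis added and the recursion corrected to the peel-one-section form, your weighted counting $n_{i+1}2^{i-1}-c_1\geq n_i2^{i-1}$ followed by halving lands exactly on $n_i2^{i-2}$, as required.
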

\begin{proof}
Assume that $C$ contains $r$ vertices $v_1,v_2,\dots,v_r$ in its \centralPath. Let $C_{\alpha,\beta}$ denote the subgraph of $C$ induced by the vertices $\bigcup_{i=\alpha}^{\beta}{\{v_i\}\cup A(v_i) \cup B(v_i)}$. We first make the following observation.
\begin{observation}
\label{obs:caterpillar}
If the \centralPath of $C$ contains exactly one section, $C$ has an \upse on any general set $S$ of $n$ points.
\end{observation}

{
To see that, we sort the points in $S=\{p_1,\dots, p_n\}$ by ascending $y$-coordinate 
and,  
assuming $C$ is forward (the backward case is symmetric),  
we embed $v_i$ on $p_x$, where $x=i+b_i+\sum_{j=1}^{i-1}(b_j+a_j)$. See to~\cref{fig:caterpillar}(a) for an illustration.
}


\begin{figure}[t!]
	\centering
	\includegraphics[page=2]{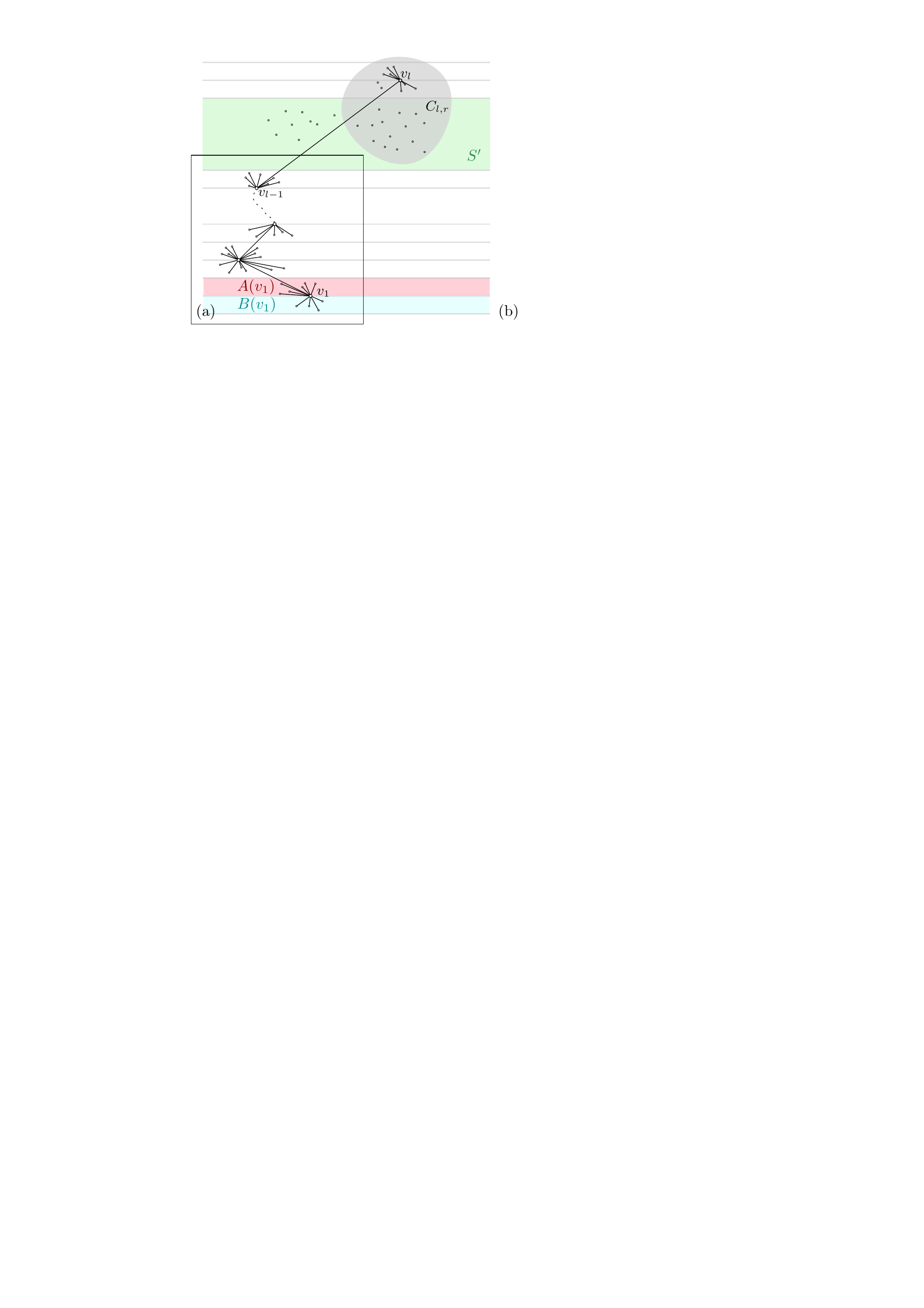}
	\caption{Illustration of the proofs of (a) Observation~\ref{obs:caterpillar}; (b) Theorem~\ref{thm:caterpillars}.}
	\label{fig:caterpillar}
\end{figure}

We now prove the theorem by induction on $k$.
We assume as the {induction hypothesis} that a directed caterpillar $C$ with $i$ switches and $n_i$ vertices has an \upse on any general point set $S$ of at least $n_i2^{i-2}$ points. 
Let $v$ denote the first vertex of the \centralPath of $C$. 
We additionally assume that if $v$ is a source (resp. sink), then $v$ is mapped on the $(|B(v)|+1)$-th bottommost (resp. $(|A(v)|+1)$-th topmost)  point of $S$.
For $k=2$, the \centralPath of $C$ contains one section; hence, the induction hypothesis follows from Observation~\ref{obs:caterpillar} and its proof. 

We now consider a caterpillar $C$ with $i+1$ switches and $n_{i+1}$ vertices. Let $v_1$ and $v_l$ denote the first and the second switches of the \centralPath of $C$, respectively. Let $S$ be a set of at least $N=n_{i+1}2^{i-1}$ points. In the following, we only consider the case where the \centralPath of $C_{1,l}$ is forward; 
the backward case is symmetric.

We construct an \upse of $C_{1,l-1}$ on the $c_1 = \sum_{i=1}^{l-1}{(1+b_i+a_i)}$ lowest points of $S$ by applying Observation~\ref{obs:caterpillar}. Let $p$ denote the point where $v_{l-1}$ is mapped. Let~$S'$ denote the unused points of $S$; thus, $|S'|=N-c_1$.  
We have $n_{i+1}=n_i+c_1$, where $n_i$ is the number of vertices of $C_{l,r}$. Recall that $N=n_{i+1}2^{i-1}$. Therefore, $|S'|>n_i 2^{i-1}$. 
Let $p'$ be the $(a_l+1)$-th topmost point in $S'$ and let $\ell$ denote the line through $p$ and $p'$. Line $\ell$ partitions $S'$ into two sets, so that for the largest, say $S''$, it holds that $|S''|\geq n_i2^{i-2}$.
 Let $S'''$ be the union of $S''$ with the set of points lying above
$v_l$.
 Since $C_{l,r}$ contains $i$ switches, by induction hypothesis, we can construct an \upse of $C_{l,r}$ on $S'''$ such that $v_l$ is mapped on the $(a_l+1)$-th topmost point of $S'''$, which is the point $p'$. The only edge of the drawing of $C_{1,l}$ that interferes with the drawing of $C_{l,r}$ is $(v_{l-1},v_l)$; however, the drawing of $C_{l,r}$ (except for the edges incident to~$v_l$) lies on one side of the edge $(v_{l-1},v_l)$. 
 See to~\cref{fig:caterpillar}(b) for an illustration.
\end{proof}

\section{NP-Completeness for Trees}
\label{sec:hardness}
In this section, we consider the following problem:
Given a directed tree $T$ with $n$ vertices, a vertex $v$ of~$T$, a set~$S$ of $n$ points in the plane, and a point $p$ in~$S$, does $T$ have an \upse on~$S$ 
which  maps $v$ to $p$?
Our goal is to show the following. 
\begin{theorem}
	\label{thm:np-1}
	\upse of a directed tree with one fixed vertex is $NP$-complete. 
\end{theorem}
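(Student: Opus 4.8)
The plan is to prove NP-completeness by reduction from a known hard problem, very likely 3-Partition or a similar number-partitioning / bin-packing style problem, which is a standard source for geometric embedding hardness results. Membership in NP is immediate: given a claimed injection from the $n$ tree vertices to the $n$ points, one verifies in polynomial time that it fixes $v$ to $p$, that the induced straight-line drawing is crossing-free, and that every directed edge goes upward. So the real work is hardness, and I would structure it as a single reduction that exploits the one fixed-vertex constraint as the key new ingredient that replaces the cycles used in the earlier result of~\cite{KaufmannMS13}.

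\textbf{Reduction design.} Given an instance of the source partition problem (say multiset of integers that must be split into groups summing to a common target), I would build a directed tree $T$ together with a point set $S$ and a designated pair $(v,p)$ so that a valid \upse{} forces the points to be distributed among ``gadget'' subtrees exactly in the way that encodes a valid partition. The fixed vertex $v\mapsto p$ is the anchor that rigidly positions a central ``spine'' of the tree; I would place $p$ so that the spine of $T$ must occupy a predetermined band of points, and then attach hanging subtrees (caterpillar-like or star-like gadgets of prescribed sizes) whose only feasible placements are forced into geometric ``pockets'' of $S$ of controlled cardinality. The counting of points available in each pocket should translate the integer values of the partition instance, while upwardness and planarity enforce that the total consumed in each pocket matches the required sum. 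The crucial role of fixing $v$ is to destroy the rotational/translational freedom that an unanchored tree would otherwise have on a point set, thereby letting a purely tree-like (acyclic underlying) structure simulate the rigidity that the previous hardness proof obtained from cycles.

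\textbf{Key steps, in order.} First I would specify the point set $S$ as a carefully chosen configuration (for instance points arranged in several nearly-horizontal clusters at distinct heights, in general position) and prove a geometric lemma stating that any \upse{} mapping $v$ to $p$ must route the spine monotonically through designated clusters --- using the upward constraint together with planarity to rule out alternative routings. Second, I would prove the \emph{completeness} direction: from any solution of the partition instance, construct an explicit \upse{}, placing each gadget subtree into its intended cluster by sorting its vertices by $y$-coordinate (reusing the monotone-placement idea already used in Observation~\ref{obs:caterpillar} and Theorem~\ref{lem:convChain}). Third, I would prove \emph{soundness}: any \upse{} fixing $v\mapsto p$ induces a partition, by arguing that the forced spine leaves exactly the pocket capacities available to the gadgets and that crossing-freeness prevents a gadget from ``borrowing'' points across pocket boundaries. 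Finally, I would check that $|V(T)|=|S|=n$ and that the construction has polynomial size in the partition instance.

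\textbf{Main obstacle.} The hardest part will be the geometric rigidity argument in the soundness direction: I must guarantee that fixing a single vertex, with only a tree (no cycles) to transmit constraints, is enough to pin down the placement of all gadgets up to the intended combinatorial choices. Cycles give global rigidity almost for free, whereas a tree is ``floppy'', so the entire force of the reduction has to come from the interaction between the upward orientation of edges and non-crossing placement on a cleverly designed $S$. I expect to need a lemma showing that, once $v$ is anchored and the spine's orientation pattern (its sequence of switches) is fixed, every monotone section of the spine is confined to a narrow geometric corridor, so that the hanging subtrees have no room except in their designated pockets; proving that no ``escape'' placement exists --- e.g. that a subtree cannot wind around to an unintended cluster without creating a crossing or a downward edge --- is where the delicate case analysis will lie.
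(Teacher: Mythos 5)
There is a genuine gap: your write-up is a plan whose crucial step---the soundness/rigidity argument---is explicitly left open, and that step is precisely where all the work in the paper's proof lies. You correctly identify the source problem (the paper indeed reduces from 3-Partition), the trivial NP membership, and the general idea of pockets whose cardinalities encode the integers with a counting argument enforcing the partition. But your proposed tree structure (an anchored ``spine'' with hanging gadget subtrees, with each monotone section confined to a corridor) is exactly the kind of construction that runs into the ``floppiness'' problem you worry about: a spine anchored at one vertex still has enormous placement freedom, and you give no mechanism to pin it down. The paper sidesteps this entirely by making $T$ a \emph{subdivision of a star} whose center $s$ is the fixed vertex, mapped to the bottom-most point $p$ of $S$: since every branch emanates from the single anchored point $p$, planarity immediately imposes a radial (angular) order on the branches around $p$, and no corridor lemma for a spine is needed. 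The point set is built from arcs of an ellipse and of a convex curve intersected with $2m+1$ sectors at $p$, alternating ``small'' pockets of $b$ points with ``large'' pockets of $\ell$ points, and the branches are oriented so that each branch's root (its unique sink) is its first or second vertex.

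The soundness direction is then Lemma~\ref{lemma:nice}: any \upse{} with $\Gamma(s)=p$ can be converted into a \cons one in which large branches occupy large pockets. Its proof is an induction over the sectors from right to left, driven by the observation that the highest point not yet accounted for must receive a sink (hence a branch root), followed by a case analysis that uses (i) divisibility of the $a_i$ by $3$ to rule out pockets of the wrong cardinality being filled by small branches, (ii) the separation properties of the segments $py$ and $yq_{i+1}$ to bound which points a straying branch could reach, and (iii) an explicit local exchange that re-embeds a few vertices to restore consistency. None of this is derivable from your outline as stated; in particular, your hope that ``crossing-freeness prevents a gadget from borrowing points across pocket boundaries'' is exactly the statement that needs the delicate case analysis and the re-embedding step, and for a spine-based construction it is not clear it can be made to work at all. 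So the proposal is not a proof, and the one concrete structural choice it does make (spine instead of star) diverges from the construction that makes the argument go through.
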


The 3-Partition problem is a strongly NP-complete problem by Garey and Johnson~\cite{garey1975complexity,garey1979computers}, which is formulated as follows: 
For a given multiset of $3m$ integers $\{a_1,\ldots, a_{3m}\}$, does there exist a  partition into $m$ triples $(a_{11}, a_{21}, a_{31}),\ldots,$ $(a_{1m},a_{2m},a_{3m})$, so that for each $j \in[m]$, $\sum_{i=1}^3a_{ij} = b$, where 
$b = (\sum_{i=1}^{3m}a_{i})/m$. 
In this section, we present a reduction from 3-Partition. 
Without loss of generality, we may assume, possibly multiplying each $a_i$ by 3, 
that each $a_i$ is divisible by 3.
\begin{figure}[htb]
	\centering
	\includegraphics[page=4]{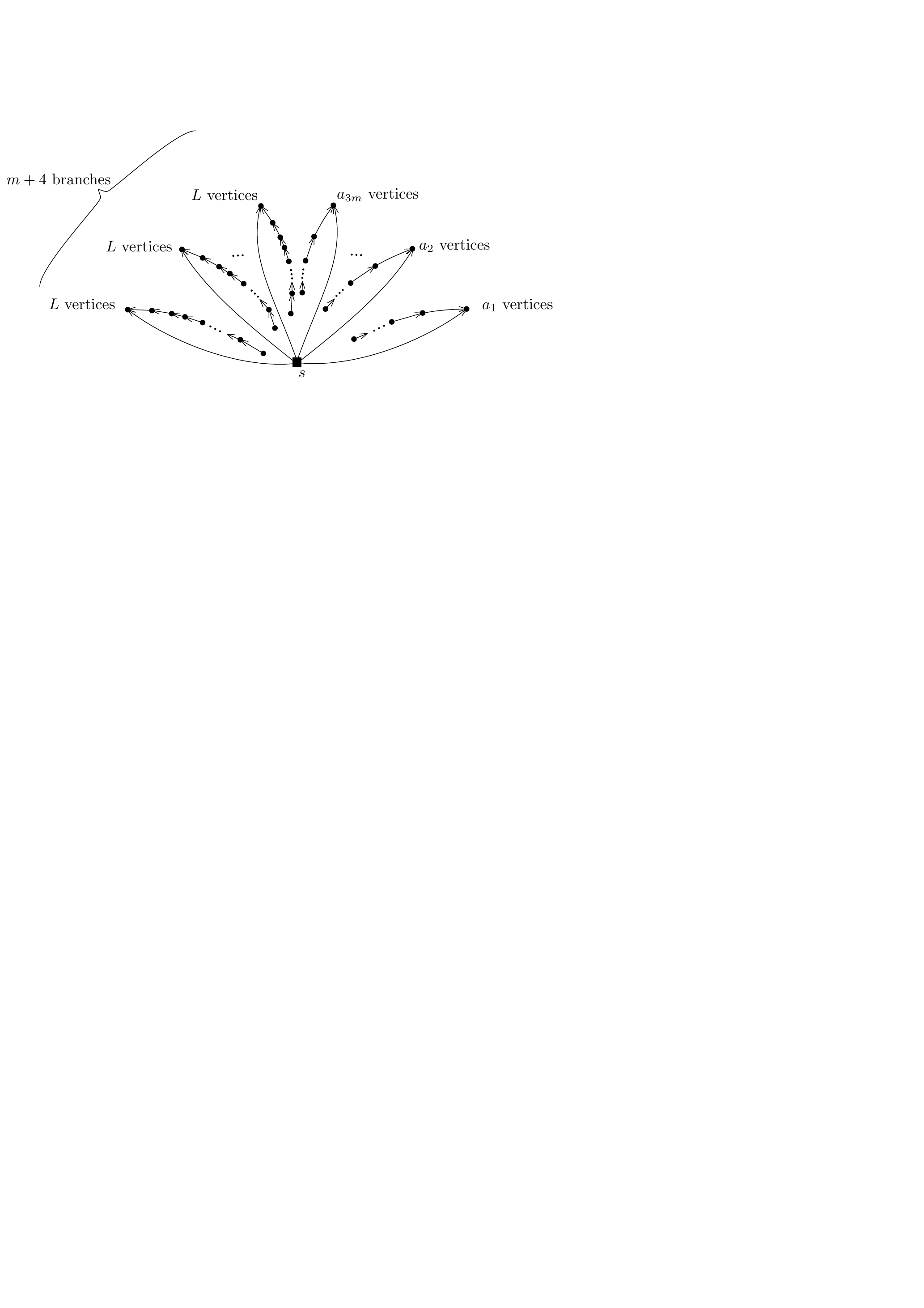}
	\caption{An illustration of the tree $T$ in the proof of Theorem~\ref{thm:np-1}.}
	\label{fig:np-tree}
\end{figure}

Given $3m$ integers $\{a_1,\ldots, a_{3m}\}$, where each $a_i$ is divisible by 3, we construct an instance of our problem as follows.  
Let $\ell$ and 
$h$ be two large numbers such that $\ell>mb$ and $h=m+1$. We construct a tree~$T$ with $n:=mb+h\ell + 1$ vertices, 
and a point set~$S$ with $n$ points.  
As illustrated in \cref{fig:np-tree}, (the undirected version of) the tree $T$ is a subdivision of a star, that is, $T$ has a single vertex~$s$ of degree greater than two. Specifically, the degree of vertex  $s$ is $3m+h$, i.e., $3m+h$ paths meet in their initial vertex $s$; we call each such path a \emph{branch} of~$T$.
Let $T_1,\ldots, T_{3m+h}$ denote the branches, respectively.  
For $i\in\{1,\dots, 3m\}$, the branch~$T_i$ is a path with $a_i$ edges and called \emph{small}; for $i\in\{3m+1,\dots,3m+h\}$, the branch $T_i$ is a path with $\ell$ edges and called \emph{large}. Note that $a_i$ and $\ell$ may also be interpreted as the number of vertices of a branch that are different from $s$.

For each branch $T_i$, we define its \emph{root} $r_i$ as follows. For a small branch $T_i$, $r_i$ is the first vertex on $T_i$ that is different from $s$; for a large branch, $r_i$ is the second vertex different from $s$. 
For every branch, all edges are oriented so that its root is the unique sink of the branch.
Thus the sinks of $T$ are exactly  the $3m+h$ roots of the branches, and the  sources of $T$ are $s$ and the $3m+h$ leaves of $T$.


The point set $S$ is depicted in \cref{fig:np-pointset} and constructed as follows.
The lowest point of $S$ is $p = (0,0)$.
Let $E$ be an ellipse with center at $p$, (horizontal) semi-major axis $5$ and (vertical) semi-minor axis $3$. 
Let $C$ be a convex $x$- and $y$- monotone curve above $E$ and also above the point $(-3,3)$, represented by the filled square mark in \cref{fig:np-pointset}.
Consider the cone defined by the upward rays from~$p$ with slopes -1, -2, and subdivide this cone into $2m+1$ equal sectors by
$2m$ upward rays from $p$. 
\begin{figure}[b!]
	\centering
	\includegraphics[page = 11]{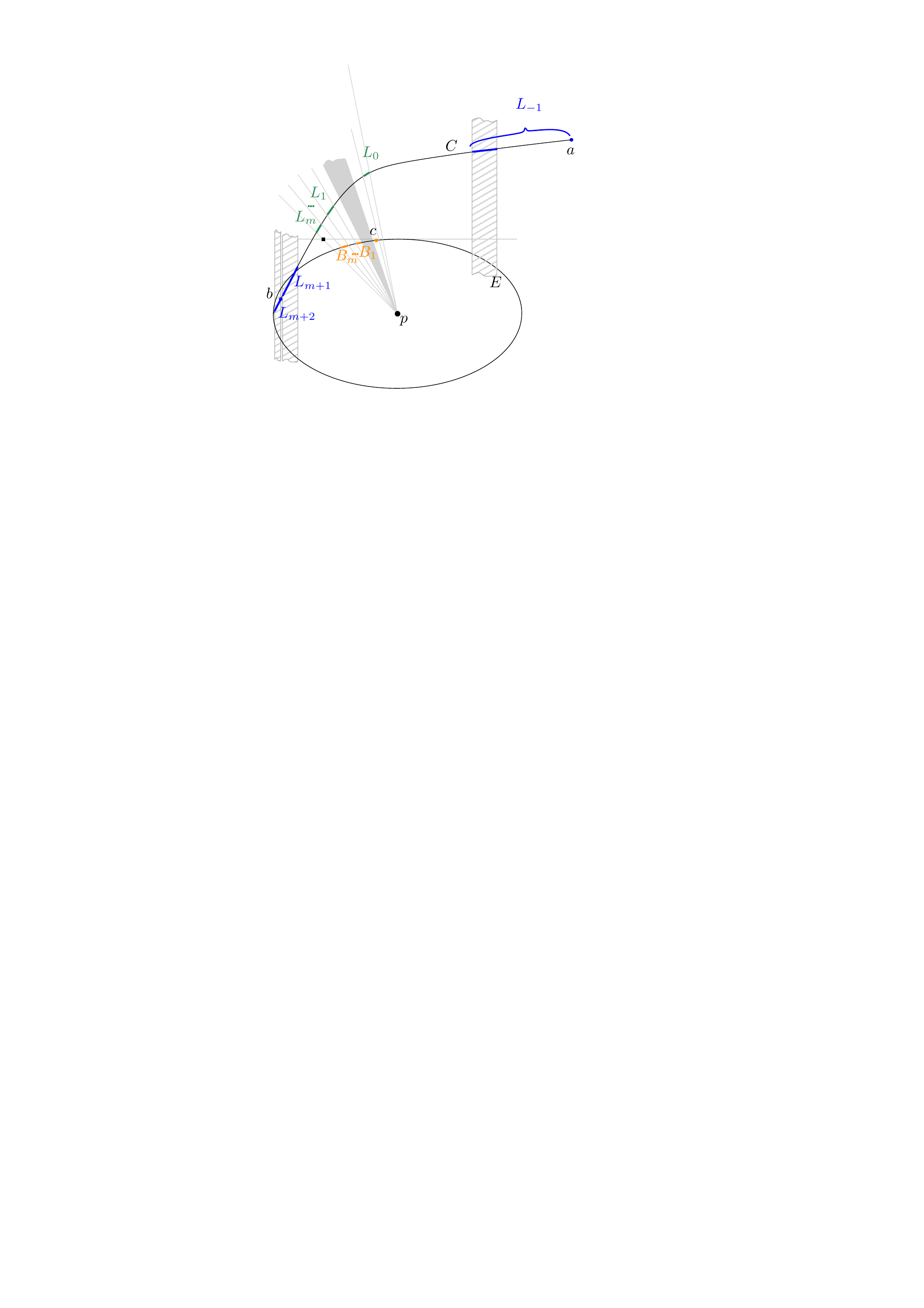}
	\caption{An illustration of the point set $S$ in the proof of Theorem~\ref{thm:np-1} and a zoomed portion of $S$.}
	\label{fig:np-pointset}
\end{figure}
Let $s_0,\ldots,s_{2m}$ denote the 
obtained sectors ordered counter-clockwise.  
For each odd $i = 2k+1$ with  $0 < i \leq 2m$,
consider the intersection between the sector $s_i$ and the ellipse $E$ (the orange arcs in~\cref{fig:np-pointset}) and let $B_k$ be the set of $b$ equally spaced points on this intersection. 

For each even $i = 2k$ with  $0 \leq k \leq m$,
consider the intersection $c_k$ between sector $s_i$ and curve $C$ (the green arcs in~\cref{fig:np-pointset}). Let the point set $L_k$ be constructed as follows. 
Place the first point $q_k'$ of $L_k$ in the sector $s_i$ slightly to the right of (and thus slightly below, to stay inside the sector) the topmost point of the arc~$c_k$. 
Let, for $k>0$, $x_k$ be the point of intersection between $c_k$ and the line through~$q_k'$ and the topmost point of the set $B_k$. For $k = 0$ let $x_k$ be the highest point of~$c_k$. 
Place the second point  $q_k$ of $L_k$ on $c_k$ slightly below $x_k$, but still above $q_k'$, see also \cref{fig:np-pointset}. Place $\ell-2$ points equally spaced on $c_k$ below~$q_k$.
This concludes our description of $T$ and $S$. 
In the remaining, the point sets $B_k$ are called \emph{small sets}, and the point
sets $L_k$ are called \emph{large sets}.

To prove~\cref{thm:np-1},  we show that $\{a_1,\ldots a_{3m}\}$ admits a 3-partition if and only if $T$ admits an \upse $\Gamma$ on $S$ where $\Gamma(s)=p$. In particular, our proof is based on the fact that if such an \upse exists, then it nearly fulfills the special property of being a \emph{\cons embedding}  of $T$ on $S$ (\cref{lemma:nice}). 
An embedding $\Gamma$ is \emph{\cons} if  $\Gamma(s)=p$  and each large branch of $T$ is mapped to exactly one large point set; for a schematic illustration consider~
\cref{fig:nice-drawing}.
\begin{figure}[tb]
	\centering
	\includegraphics[page=2]{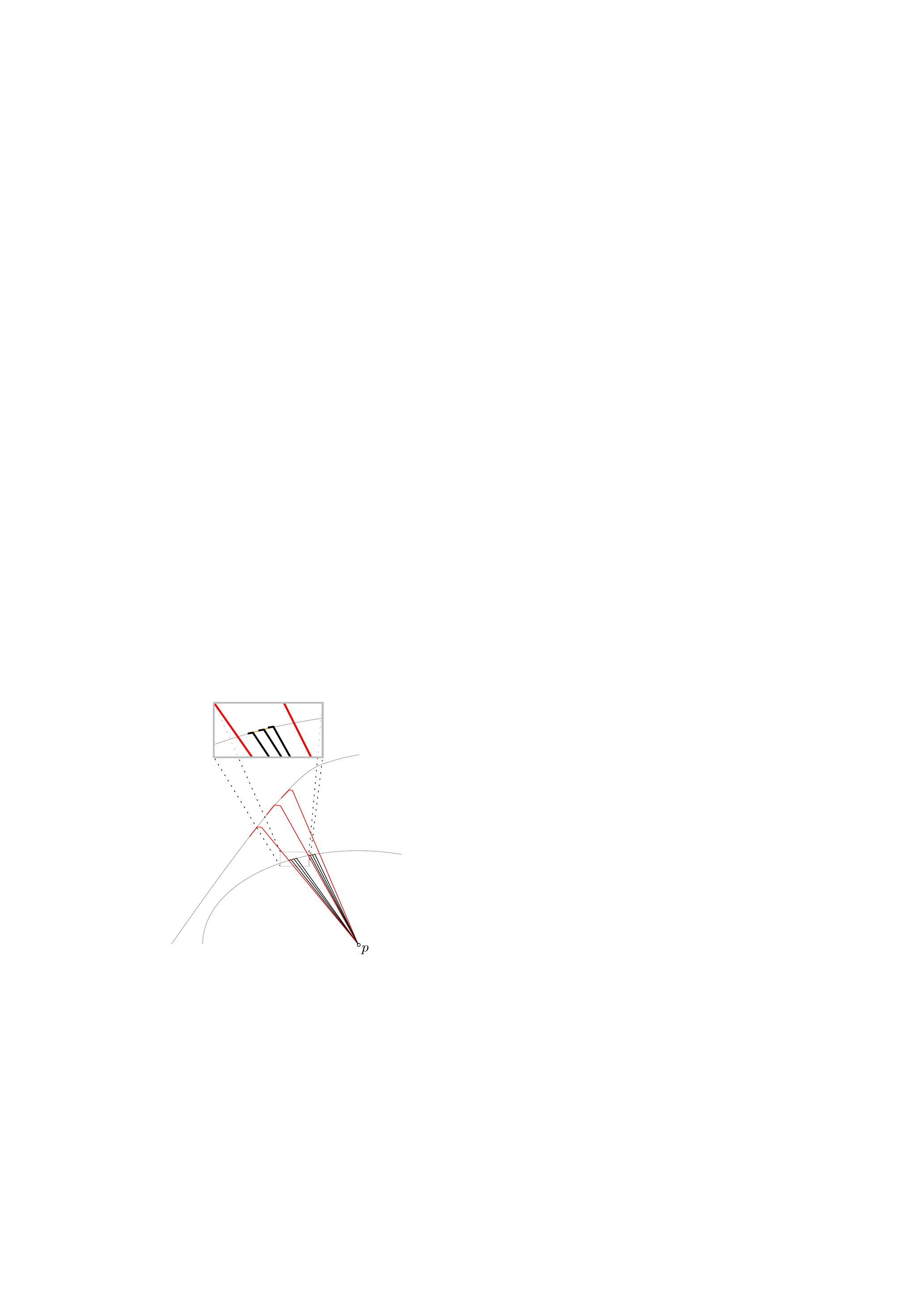}
	\caption{A schematic illustration of a \cons embedding of $T$ on $S$.}
	\label{fig:nice-drawing}
\end{figure}

By construction of $S$, if the large branches are mapped to the large point sets, then the small branches of $T$ must be subdivided into $m$ groups each with a total of $b$ vertices.  Each such group then corresponds to a triple $(a_{1i},a_{2i},a_{3i})$ of the partition of $(a_1,\ldots,a_{3m})$ that sums up to $b$. Conversely, a 3-partition directly yields a \cons embedding. This proves the following fact:
\begin{observation}
	$T$ has a \cons embedding on $S$ if and only if $\{a_1,\ldots a_{3m}\}$ has a 3-partition.
\end{observation}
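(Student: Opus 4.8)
The plan is to prove the two directions of the equivalence separately, treating the construction of a \cons embedding from a 3-partition as the routine direction and the extraction of a 3-partition from an arbitrary \cons embedding as the one requiring a genuine geometric argument. Throughout I would use that $\Gamma(s)=p$ holds by definition of consistency, that $p$ is the unique lowest point of $S$, and that by the chosen orientation the root of each branch is its unique sink, so $s$ is a source and every branch leaves $p$ going upward.

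For the direction ``3-partition $\Rightarrow$ \cons embedding'' I would exhibit $\Gamma$ explicitly. Map $s$ to $p$ and the $m+1$ large branches bijectively onto the $m+1$ large sets $L_0,\dots,L_m$; inside a large branch $s\to w_1\to r_i\leftarrow w_3\leftarrow\dots\leftarrow w_\ell$ I send the sink $r_i$ to the topmost point $q_k$ of $L_k$, the internal vertex $w_1$ to the off-curve point $q_k'$ (so that the two-edge stub $s\to w_1\to r_i$ is upward and stays inside its even sector), and the descending tail $w_3,\dots,w_\ell$ to the remaining $\ell-2$ points of the arc $c_k$ in decreasing height. For each triple of the partition, whose three small branches contain exactly $b$ vertices in total, I place those branches into one small set $B_k$ (which has exactly $b$ points), assigning heights consistently with each branch's orientation. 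I would then verify upwardness directly from the height orderings and planarity from two facts: distinct branches occupy distinct sectors, and each set lies on a convex, $x$- and $y$-monotone arc, so the straight-line drawing inside a single sector is crossing-free and no two sectors interfere. This is exactly the configuration shown in \cref{fig:nice-drawing}.

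For the converse, suppose $\Gamma$ is \cons. A counting step comes first: there are $m+1$ large branches, each with $\ell$ vertices other than $s$, and $m+1$ large sets, each of size $\ell$; since by consistency each large branch maps into a single large set, the large branches are matched bijectively to the large sets and occupy all of them. Hence the small branches, whose vertices number $\sum_i a_i = mb$, must occupy precisely the union of the small sets, which also has $mb$ points. The crucial claim is then that \emph{each small branch lands entirely within one small set}. I would establish this by a separation argument: in $\Gamma$ every large branch is drawn inside an even sector as a curve that runs from $p$ up to the top of its arc and then back down along $c_k$, so the occupied even sectors act as radial barriers separating the odd sectors that carry the small sets. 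Any edge of a small branch joining points of two different small sets is a straight chord that must cross at least one intervening even sector and would therefore cross the large branch drawn there, contradicting planarity; since the vertices of a small branch are linked by such edges, they all lie in a common small set.

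With the localization claim in hand I would finish the converse by a second count: each small set $B_k$ has exactly $b$ points, so the small branches mapped into it contribute exactly $b$ vertices, partitioning $\{a_1,\dots,a_{3m}\}$ into $m$ groups each summing to $b$. Invoking the standard range restriction $b/4<a_i<b/2$ that accompanies the strongly NP-complete formulation of 3-Partition, every such group must contain exactly three elements, so the partition is a genuine 3-partition. I expect the separation argument behind the localization claim to be the main obstacle: it is the only step that truly uses the geometry of $S$—the alternation of small and large sectors and the radial shape of the large-branch drawings—and making ``the large branch blocks the intervening sector'' precise, in particular ruling out a chord slipping past the barrier near $p$, is where the real care lies; the two counting steps and the explicit construction are otherwise routine.
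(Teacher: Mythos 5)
Your proposal is correct and follows essentially the same route as the paper, which dispatches this observation in a single preceding paragraph; you merely supply details the paper leaves implicit, namely the explicit construction for the forward direction, the separation argument showing each small branch must land in a single small set, and the standard range restriction $b/4<a_i<b/2$ guaranteeing that groups summing to $b$ are genuine triples. Nothing in your argument diverges from the paper's approach.
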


Now we prove the main lemma of this section in order to conclude~\cref{thm:np-1}.

\begin{lemma} \label{lemma:nice}
	If $T$ admits an \upse $\Gamma$ 
	on $S$ mapping $s$ to $p$ ($\Gamma(s) = p$), then 
	$T$ also admits a \cons  \upse on $S$ mapping $s$ to $p$.
\end{lemma}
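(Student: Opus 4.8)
The plan is to show that an arbitrary \upse $\Gamma$ with $\Gamma(s)=p$ can be transformed into a \cons one by a sequence of local modifications, each of which preserves planarity and upwardness while strictly increasing the number of large branches that are correctly placed on large point sets. First I would establish the geometric constraints forced by $\Gamma(s)=p$. Since $p=b(S)$ is the lowest point and $s$ is a source, every branch must leave $p$ going upward; moreover each branch is monotone toward its root (the unique sink) and then monotone upward again to its leaf source. The key geometric fact to extract from the construction is that the points of $S$ naturally split into the $2m+1$ angular sectors emanating from $p$: the small sets $B_k$ sit on the low ellipse $E$ inside the odd sectors, while the large sets $L_k$ sit high up on the curve $C$ inside the even sectors. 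I would argue that because $C$ lies strictly above $E$ and above $(-3,3)$, and because the sectors are thin and separated, the only way to route a \emph{large} branch (which needs $\ell$ points beyond $s$, with $\ell$ huge compared to $mb$) without crossings is essentially to climb one of the tall curves $c_k$; a large branch cannot fit its $\ell$ vertices among the $B_k$ points, since all small sets together contain only $mb<\ell$ points.

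The main structural step is a counting/pigeonhole argument combined with a planarity exchange. There are exactly $h=m+1$ large branches and $m+1$ large sets $L_0,\dots,L_m$, each of size $\ell$. I would show that each large set can accommodate the tall part of at most one large branch: two large branches attempting to share one tall curve $c_k$ would force their root vertices (the sinks, placed as the second vertex from $s$ on a large branch) and their long upward tails to interleave along the narrow sector, producing either a crossing or a violation of upwardness near the point $q_k'$ where the root of a \cons branch is meant to sit. The careful point $q_k,q_k',x_k$ construction is precisely what guarantees a unique admissible ``slot'' for a root on each large curve, so at most one large branch can root itself there. Combined with the fact that the $L_k$ are the only sets large enough to host a large branch, a pigeonhole argument then yields a bijection between large branches and large sets.

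The hard part will be the \emph{exchange argument} that repairs a large branch which is only partially on a large set --- for instance, a large branch whose tail wanders into a small set or into an adjacent even sector. Here I would take a large branch that is not cleanly embedded, identify the large set $L_k$ its root slot points to, and reroute the entire branch onto the points of $c_k$ in $y$-sorted order, as in the proof of \cref{obs:caterpillar}; I must then verify that the vacated points can absorb whatever small-branch vertices were displaced, and that no new crossings are introduced with already-repaired branches. The convexity and monotonicity of $C$ (and the fact that the $B_k$ lie below everything on $C$) should make each rerouting crossing-free and upward, and because each repair increases the count of cleanly-placed large branches, the process terminates in a \cons embedding. I expect the delicate bookkeeping to be in showing that displaced small-branch vertices always land on enough remaining points inside their own odd sectors without colliding with the large branches, which is where the slopes $-1,-2$ and the equal subdivision into $2m+1$ sectors are used to keep the small and large regions cleanly separated.
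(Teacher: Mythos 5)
Your high-level intuition is right --- large branches must end up on large sets for counting reasons ($\ell > mb$ means no large branch fits on the small sets), and a repair/exchange argument converts an arbitrary embedding into a \cons one --- but the proposal is missing the two ingredients that actually make the argument close, and it defers the genuinely hard step to ``I would verify''. First, the paper's proof is not a global pigeonhole plus a progress measure on ``correctly placed large branches''; it is an induction from right to left over the sectors, anchored on the observation that the roots are the sinks of $T$ and hence the \emph{highest remaining point} $q_{i+1}$ of $S\setminus S_i$ must be the image of some root $r_j$. That single observation is what turns the problem into a finite case analysis on where the branch $T_j$ owning that root can go, and your proposal has no analogue of it; without it, your claim that ``each large set can accommodate at most one large branch'' and the resulting bijection are asserted rather than proved (a large branch could a priori have its root on a large curve but scatter its tail across several sectors, and nothing in your sketch rules this out).

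Second, and more importantly, you never use the preprocessing assumption that every $a_i$ is divisible by $3$. This is the load-bearing counting fact in the paper: whenever planarity forces some region to be covered exactly by a union of small branches, the number of points in that region must be divisible by $3$, and the sets that arise in the bad cases ($B_{i+1}\cup\{q'_{i+1}\}$, $B_{i+1}\cup\{q_{i+1},q'_{i+1}\}$, $B_{i+1}\cup\{q'_{i+1},q'_{i+2}\}$, etc.) have cardinality $\equiv 1$ or $2 \pmod 3$. This is how the paper rules out a small branch rooting at $q_{i+1}$ and rules out the second vertex $y$ of a large branch landing on $q_{i+2}$ or $q'_{i+2}$; your separation-by-sectors argument cannot substitute for it. Finally, the ``delicate bookkeeping'' you flag as the hard part is exactly the content of the paper's explicit local exchange (sliding the vertices on $A'\cup\{y'\}$ one slot leftward in $B_{i+2}$ and remapping $\Gamma^{-1}(y)$ to $q'_{i+1}$); wholesale rerouting of the branch onto $c_k$ in $y$-sorted order, as you propose, does not obviously preserve planarity against the already-embedded portion of $\Gamma$ outside $S_{i+1}$, which is why the paper performs only a bounded, carefully localized swap. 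As written, the proposal is a plausible outline but not a proof.
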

\begin{proof}
	To show that 
	$T$ admits a \cons embedding, 
	we work our way from right to left. Let $\Gamma_{i}$ denote the partial embedding of $T$ induced by restricting~$\Gamma$ to the points in $S_i:=\cup_{k\leq i}(L_k\cup B_k)$  for every $i$. We say $\Gamma_{i}$  is \cons if large branches are mapped to large sets of~$S_i$ and small branches are mapped to small sets of~$S_i$.
	In particular, we prove the following:
	If there is an embedding $\Gamma$ with $\Gamma(s) = p$ whose partial embedding $\Gamma_i$ is \cons, then there is an embedding~$\Gamma'$ with $\Gamma'(s) = p$ whose partial embedding $\Gamma'_{i+1}$ is \cons.

	For unifying notation, we define $B_0,B_{-1},L_{-1}:=\emptyset$. Suppose there is an embedding $\Gamma$ such that $\Gamma_i$ is \cons.
	Suppose that the partial embedding $\Gamma_{i+1}$ of $\Gamma$ is not \cons (otherwise let $\Gamma'$ be $\Gamma$). 
	Since $q_{i+1}$ is the highest point of $S\setminus S_i$, $\Gamma(r_j)=q_{i+1}$ for some branch $T_j$.  
	We distinguish two cases depending on whether $T_j$ is a small or large branch.
	If $T_j$ is a small branch, then $B_{i+1}$ and~$q_{i+1}'$ is to the right of segment $pq_{i+1}$. Depending on whether $T_j$ continues left or right, a set of small branches is mapped on $B_{i+1}\cup\{q_{i+1}'\}$ or $B_{i+1}\cup\{q_{i+1},q_{i+1}'\}$, respectively. (In the latter case, $T_j$ is one of the small branches mapped on $B_{i+1}\cup\{q_{i+1},q_{i+1}'\}$.)  However, the cardinality of these sets is not divisible by three; a contradiction. Therefore $T_j$ is a large branch. 
	
	Then there exists a point $y \neq q_{i+1}'$ such that the segments $py$ and $yq_{i+1}$ represent the first two edges of~$T_j$. If $y$ belongs to a large set, then $yq_{i+1}$ separates all points between  $y$ and $q_{i+1}$ on $C$ from~$p$. 
	Therefore, there are only three different options for the placement of $y$ on a large set: $q_{i+2}$, $q'_{i+2}$ or the left neighbour of $q_{i+1}$ in $L_{i+1}$. 
	If $y$ is $q_{i+2}$ or $q'_{i+2}$, then a number of small branches are mapped to $B_{i+1}\cup\{q'_{i+2}, q'_{i+1}\}$ or to $B_{i+1}\cup\{q'_{i+1}\}$, respectively. However, the cardinality of these sets is not divisible by three, a contradiction. 
	In the case that $y$ is the left neighbour of $q_{i+1}$ in $L_{i+1}$, branch $T_j$ must continue to the right of $q_{i+1}$. However, this would imply that $T_j$ contains at most $b+3$ points, a contradiction to $\ell>b+3$. 
	Therefore, $y$ belongs to a small set $B_j$ with $j> i+1$. Moreover, $y$ belongs to $B_{i+2}$, otherwise $q'_{i+2}$ had to be the root of a large branch which would lead to a contradiction. 
	Let $A$ be the set of all points of $B_{i+2}$ to the right of $y$. 
	The set $A \cup \{q_{i+1}'\}$ is separated from the rest of the points of $S\setminus (S_i \cup \Gamma(T_j) \cup B_{i+1})$ by the segments $py$ and $yq_{i+1}$.  
	Since there are less than $\ell$ points in $A\cup \{q'_{i+1}\}\cup B_{i+1}$,  several small branches of $T$ are mapped to $A \cup \{q'_{i+1}\}\cup B_{i+1}$. In addition, since $|B_{i+1}\cup\{q_{i+1}'\}|$ is not divisible by three, for the 
	small branch that maps its root to $q'_{i+1}$, the rest of this branch is mapped to the left of $pq'_{i+1}$. 
	To construct a sought embedding $\Gamma'$ we modify $\Gamma$ as follows. Consider the point~$y'$ in 
	$B_{i+2}$ ($y'$ lies to the right of $y$)
	such that the segment $y'q'_{i+1}$ is in the partial embedding of $\Gamma_{i+1}$.  
	First, we map to $q_{i+1}'$ the vertex of $T_j$ previously mapped to $y$ (i.e., $\Gamma^{-1}(y)$). Let $A' \subset A$ be all the points to the left of $y'$ in $A$. For each vertex mapped to a point $z$ in $A'\cup\{y'\}$, we map such vertex to the left neighbour of $z$ in $A'\cup\{y\}$. Finally, the vertex that was mapped to $q'_{i+1}$ is now mapped to $y'$. See Fig.~\ref{fig:reviewer-case}.
	
	\begin{figure}[t!]
		\centering
		\includegraphics[page = 14]{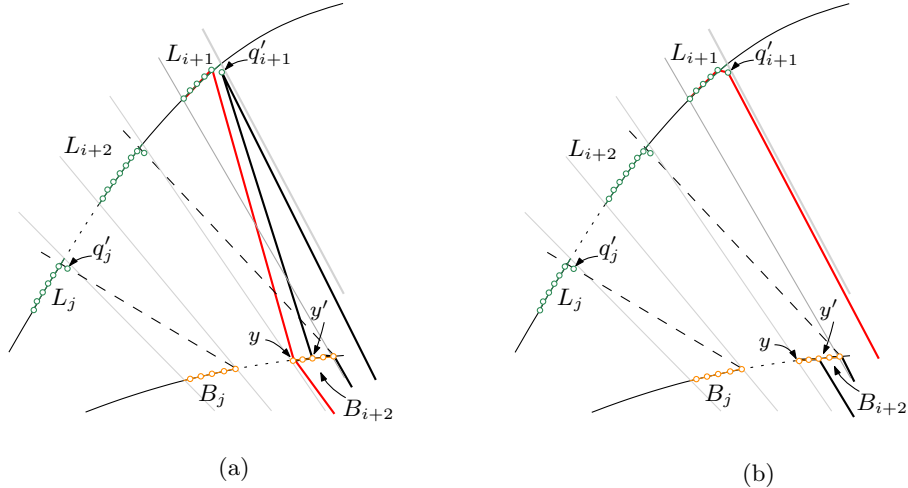}
		\caption{(a) A partial embedding 
			$\Gamma_{i+1}$ of $\Gamma$ that is not \cons, while $\Gamma_i$ is. (b) A modification $\Gamma'$ of $\Gamma$ such that  $\Gamma'_{i+1}$ is \cons.}
		\label{fig:reviewer-case}
	\end{figure}
	
	The obtained embedding $\Gamma'$ is such that $\Gamma'_{i+1}$ is \cons. Indeed, in $\Gamma'$ the branch $T_j$ is embedded entirely on $L_{i+1}$; otherwise some point of $L_{i+1}$ would be separated from $p$.
	And since $|B_{i+1}|=b<\ell$, a set of small branches is embedded on $B_{i+1}$.
	Together with the obvious fact that in any \upse $\Gamma$ of $T$ on~$S$, $\Gamma_{-1}$ is \cons, the above observation implies the claim of the lemma.
\end{proof}

\section{Conclusion and Open Problems}
In this paper, we continued the study of \upse of directed graphs, specifically of paths, caterpillars, and trees. On the positive side, we showed that a certain family of $n$-vertex oriented paths admits an \upse on any general $n$-point set and that any caterpillar can be embedded on a general point set if the set is large enough.  Moreover, we provided the exact number of \upse{s} for every path on a one-sided convex set. 
On the negative side,  we proved that the problem of deciding whether a directed graph on $n$ vertices admits an \upse on a given set of $n$ points remains NP-complete even for trees when one vertex lies on a pre-defined point. 
We conclude with a list of interesting open problems:
\begin{compactenum}
    \item Given any oriented path $P$ and any general point set $S$ with $|P|=|S|$, does there exist an \upse of $P$ on $S$?
    \item If the answer to the previous question turns out to be negative, what is the smallest constant $c$ such that a set of $c$ paths on a total of $n$ vertices, has an \upse on every general point set on $n$ points? This problem could also be an interesting  stepping stone towards a positive answer of the previous question. 
    \item Given a directed tree $T$ on $n$ vertices and a set $S$ of $n$ points,  is it NP-hard to decide whether $T$ has an \upse on $S$?
    \item Can the provided upper bound on the size of general point sets that host every caterpillar be improved to polynomial in the number of vertices, following the result in~\cite{JupwardPath14}? If yes, can this be extended to general directed trees? 
\end{compactenum}

\bibliographystyle{abbrv}
\bibliography{ref}

\end{document}